\newtheorem{theorem}{Theorem}[section]
{\theoremstyle{plain}

 }
\newtheorem{prop}{Proposition}
\newcommand{\bs}{\boldsymbol}
\newcommand{\blind}{1}
\date{}
\begin{document}

\def\spacingset#1{\renewcommand{\baselinestretch}%
{#1}\small\normalsize} \spacingset{1}


\if1\blind
{
  \title{\bf Model-based clustering of categorical data based on the Hamming distance}
  \author{Raffaele Argiento\thanks{We would like to thank the Editor, the Associate Editor, and the Referee for the insightful and
constructive comments. We also thank Federico Castelletti and Brendan Murphy for the helpful discussions. The first author's research was partially supported by MUR-Prin 2022 - Grant no. 2022CLTYP4 and the third author's research was partially supported by MUR-PRIN grant 2022 SMNNKY CUP J53D23003870008.}
    \hspace{.2cm}\\
    Department of Economics, Universit\`a degli studi di Bergamo, Italy\\
    and \\
    Edoardo Filippi-Mazzola \\
    Institute of Computing, Universit\`a della Svizzera italiana, Switzerland\\
		and\\
		Lucia Paci\\
		Department of Statistical Sciences, Universit\`a Cattolica del Sacro Cuore, Italy\\
		}
  \maketitle
} \fi

\if0\blind
{
  \bigskip
  \bigskip
  \bigskip
  \begin{center}
    {\Large\bf   Model-based clustering of categorical data based on the Hamming distance }
\end{center}
  \medskip
} \fi

\bigskip
\begin{abstract}
A model-based approach is developed for clustering categorical data with no natural ordering. The proposed method exploits the Hamming distance to define a family of probability mass functions to model the data. The elements of this family are then considered as kernels of a finite mixture model with an unknown number of components.
 Conjugate Bayesian inference has been derived for the parameters of the Hamming distribution model. The mixture is framed in a Bayesian nonparametric setting, and a transdimensional blocked Gibbs sampler is developed to provide 
full Bayesian inference on the number of clusters, their structure, and the group-specific parameters, facilitating the computation with respect to customary reversible jump algorithms. 
The proposed model encompasses a parsimonious latent class model as a special case when the number of components is fixed. 
Model performances are assessed via a simulation study and reference datasets, showing improvements in clustering recovery over existing approaches. 
\end{abstract}

\noindent%
{\it Keywords:} Bayesian clustering, Dirichlet process, finite mixture models, Markov chain Monte Carlo, conditional algorithm
\vfill

\newpage
 \spacingset{1.5} 
\section{Introduction}
\label{intro}
The goal of clustering analysis is to discover the underlying structure of the data and classify the observations into different subsets (or clusters) so that pairwise dissimilarities between observations
belonging to the same cluster tend to be smaller than observations in different clusters. Although many existing algorithms for clustering numerical data exist,  a limited number of methods have been specifically proposed for clustering categorical data. In this work, we are concerned with multivariate nominal data, i.e., a set of $p$ variables with a measurement scale consisting of categories with no natural ordering. 
Data on categorical scales are routinely collected in a wide range of applications, including social sciences, medical studies, epidemiology, ecology, and education \citep{agresti_book}. 

One possible avenue for clustering categorical data is based on heuristic methods that leverage a distance metric between data points. Popular examples are the K-modes algorithm \citep{Huang} and the Hamming distance-vector algorithm \citep{zhang_clustering_2006}. The former is a modified version of the K-means algorithm \citep{macqueen1967} that uses a simple matching dissimilarity measure between the modes of the clusters, while the latter finds clustering patterns by measuring the data proximity through the Hamming distance. The main drawback of distance-based methods is that they fail to quantify the uncertainty in the clustering structure, i.e., the uncertainty about any observation's group membership.

Rather, probabilistic or model-based clustering approaches allow quantifying the uncertainty associated with the clustering estimation since a statistical model is postulated for the population from which the data are sampled. 
The natural way to handle model-based clustering is through mixture models, where observations are assumed to come from one of the $M$ possible (finite or infinite) groups. Each group is suitably modeled by a density referred to as a component of the mixture and is weighted by the relative frequency (weight) of the group in the population.

 In this framework, Latent Class Models (LCMs; \citealt{goodman1974, handbook,frutta2019}) serve as the workhorse of clustering methods for categorical data. LCMs assume that the data are generated by a mixture of multivariate multinomial distributions, where each mixture
component represents a latent class (i.e., a cluster), and the variables are conditionally independent, knowing the clusters. 
LCMs are very flexible and find applications in many research fields;  see \citet{hagenaars2002} for a collection of examples. 
However, in practice, the high number of model parameters often requires imposing restrictions on the parameter space to make the inference computationally feasible. Moreover, a traditional challenge for LCMs concerns the choice of the number of classes.
A common approach to address this issue is based on model selection information criteria, which requires fitting several mixture models with an increasing number of classes. 

	In the last decades, mixture models have been intensely investigated under the  Bayesian approach.
	Arguably, one of the most important examples
	is the Dirichlet process mixture model \citep{lo1984class}, which considers 
	an infinite number of components with mixture weights obtained by a stick-breaking representation \citep{sethuraman1994constructive}.
	One of the main reasons that have made the Dirichlet process mixture very popular is the existence of relatively simple and flexible algorithms for posterior computation. Such algorithms are based on the availability of many theoretical properties of this class of models,
	 including  the random
	 discrete measure formulation,
	 the exchangeable partition distribution \citep{pitman1995} and the Blackwell–MacQueen
	urn process \citep[i.e., the Chinese restaurant process;][]{pitman1996}.
	Recently, \citet{Argiento2022IsIT} introduced a class of finite-dimensional mixture models (i.e., with a finite and possible random $M$) 
	enjoying the aforementioned theoretical properties and encompassing the finite-dimensional Dirichlet process mixture model as a special case.
	Specifically, the latter assumes a random number of mixing components and a symmetric Dirichlet prior to the weights; the model is also referred to as a mixture of finite mixtures \citep{miller2018,frutta2021}.
	On a computational side, posterior inference for finite mixture models with random $M$
	needs a transdimensional algorithm 
	that accommodates jumps between parameter spaces of different dimensions (according to the number of mixing components). 	
The Reversible Jump Markov chain Monte Carlo  
method \citep{green1995} has been the first relevant solution to deal with transdimensionality.
	The algorithm is quite popular, but it requires the design of good reversible jump moves, which poses challenges in applications, particularly in high-dimensional parameter spaces. 

On the other hand, in the Bayesian nonparametric literature, many algorithms have been proposed for posterior inference of infinite mixture models. 
These algorithms can be classified into two main groups: 
(i) \emph{conditional algorithms}, that provide full Bayesian inference on both the mixing parameters and the clustering structure 
\citep{papaspiliopoulos2008retrospective,kalli2011slice}; 
(ii) \emph{marginal algorithms}, that simplify the computation by integrating out mixture parameters and providing inference just on the clustering structure \citep{maceachern1998estimating}.  
 Exploiting the link between finite and infinite mixture models, 
  algorithms developed for Bayesian nonparametric models can be adapted to finite mixture models. 
 Examples are the Chinese restaurant process sampler in \citet{miller2018}, the telescoping sampling developed by \citet{frutta2021}, and the two augmented  Gibbs samplers proposed by \citet{Argiento2022IsIT}. 

The contribution of this work is to propose a mixture model for clustering unordered categorical data based on the Hamming distance. We first introduce a family of probability mass functions, 
built on the Hamming distance, to describe random vectors with support on a categorical space; we referred to the elements of this family as the Hamming distributions. Such distributions are characterized by two parameters that represent the distribution's center and dispersion, respectively. Conjugate Bayesian inference is derived for the two model parameters, together with a closed analytic form of the marginal likelihood. To capture the dependence between the categorical variables and simultaneously cluster the data, 
the Hamming distributions are used as kernels of a mixture model with a random number of components. We refer to the model as the Hamming mixture model (HMM). 
 Following \citet{Argiento2022IsIT}, we framed the HMM in a Bayesian nonparametric setting by assigning a prior distribution on the mixing weights through a normalization of Gamma random variables. A conditional Markov chain Monte Carlo  (MCMC) algorithm is designed to provide full posterior inference on the number of clusters, their structure, and the group-specific parameters. With all the full conditional distributions available in a closed analytical form,  
the algorithm turns to be a blocked Gibbs sampler, where transdimensional moves are automatically implied by the model. Therefore, the proposed sampling strategy offers a convenient alternative to the intensive reversible jump MCMC.

The proposed model offers a new and convenient parametrization of the LCM based on the Hamming distance, with an easy interpretation of the parameters and without restrictions on the parameter space while keeping the parsimony. Moreover, in contrast to customary LCMs that consider the number of latent classes as a fixed quantity, a key feature of the proposed model is that the number of mixing components is assumed to be random. 
The joint prior of the number of mixing components and the mixing weights induce a prior on the number of clusters, 
i.e., the components that have actually generated the data \citep{nobile2004}.  
Leveraging the connection between the LCM and the HMM, a sufficient condition for model identifiability is proved, together with the convergence of the posterior distribution of the number of components to a point mass at the true value.
Hence, the model developed in the paper extends the LCMs to the case of a random number of clusters that is automatically learned from the data, overcoming traditional issues of model selection based on information criteria. 
Moreover, the model can be used for multiple imputations of missing categorical data. For these reasons, the HHM lends itself to several potential applications.
An extensive simulation study and a set of real data examples illustrate the modeling and computational advantages of the proposed clustering approach.

The remainder of the paper is organized as follows. In Section \ref{sec:methods}, we introduce the Hamming distribution and discuss its properties, including the conjugate Bayesian inference of its parameters. Mixture modeling based on the Hamming distribution is developed in Section \ref{sec:mixture}, together with a discussion of the connections to the latent class models, the identifiability and consistency results, and the fitting details. 
Section \ref{sec:analysis} illustrates the proposed mixture model with applications to real-world datasets. Section \ref{sec:discussion} concludes the paper, and the Supplementary materials complement it with additional theoretical results and proofs, a detailed description of the MCMC sampling strategy, an extensive simulation study, and further real data examples. 

\section{Background and methods}
\label{sec:methods}

\subsection{Categorical sample space}
\label{sec:2}

Let $\bs{X}=(X_1, \dots, X_p)$ be  a vector of $p$ nominal categorical variables, or {\it attributes}, where 
each  variable $j$, for $j=1,...,p$,  can assume  $m_j$ possible levels called {\it modalities} (or categories) over the finite set $A_j=\{a_{j1},\dots, a_{jh},\dots,a_{jm_j}\}$. We denote by $\bs{x}=(x_{1},\dots,x_{p})$ a vector of  observed modalities. 
The categorical sample space is then defined as 
a collection of all possible $p$-dimensional vectors of modalities,
namely $\Omega_p= A_1 \times A_2 \times...\times A_p$, 
or equivalently,
\begin{equation*}
\Omega_p=\left\{\bs{x}=(x_1,\dots,x_p); x_1\in A_1,\dots,x_p\in A_p \right\}.
\end{equation*}
The set $\Omega_p$ is a discrete sample space of size $N=\prod_{j=1}^{p}m_j$. The cardinality of $\Omega_p$ will increase drastically as the
number of attributes and categories increases. This number can be quite large in real applications when there are many variables and categories. This can be a severe issue when performing inference based on data collected over this space. Nevertheless,  endowed with the Hamming distance $\Omega_p$  is a metric space, so some analytical properties can be used in support of the inference. 

 Given two observed vectors $\bs{x}_i$ and $\bs{x}_i^\prime$ in $\Omega_p$, the  Hamming distance \citep{hamming} 
is the number of attributes whose modalities are different in the two vectors analytically  
 \begin{equation}
 	d_H(\bs{x}_i,\bs{x}_{i^\prime})=\sum_{j=1}^p \left[1-\delta_{x_{{i}j}}(x_{i^\prime j})\right],
\label{eq:hamming} 
\end{equation}
 where  $\delta_{x_{i^\prime j}}(x_{ij})$ denotes the Kronecker delta of $x_{i^\prime j}$ and $x_{ij}$, i.e., 
\begin{equation*}
	\delta_{x_{i j}}\left(x_{i^\prime j}\right)=
	\begin{cases}
		{1 \; \text{if} \; x_{i^\prime j}\; = \; x_{ij}}\\
		{0 \; \text{if} \; x_{i^\prime j}\; \neq \; x_{ij}}.
	\end{cases}
\end{equation*}
The distance in Equation \eqref{eq:hamming} is a proper metric distance function (see \citealt{zhang_clustering_2006} for details). Thus, $\Omega_p$ endowed with the Hamming distance is a proper metric space that is also referred to as the Hamming sample space. To clarify, given any two points in $\Omega_p$, their Hamming distance can only assume a finite number of integer values ranging from $0$ to $p$. 
  In other words, if  $d_H(\bs{x}_i,\bs{x}_{i^\prime})=q$, then the two observations  differ by $q$ attributes or, equivalently, they coincide by $p-q$ attributes. 

\subsection{Hamming distribution}
\label{sec:hamming_dist}
We define a parametric family of probability mass function (p.m.f) with support $\Omega_p$ as follows. Let $\bs{c} =(c_1,\dots,c_p) \in \Omega_p$, and $\bs{\sigma}=(\sigma_1, \dots, \sigma_p)$, where $\sigma_j >0$, $j=1, \dots, p$.  
We say that a random vector $\bs{X}=(X_1,\dots,X_p)$ with support  $\Omega_p$ follows an {\it Hamming distribution} with center $\bs{c}$ and scale $\bs\sigma$ if its p.m.f. for each $\bs{x}\in\Omega_p$ is given by
\begin{eqnarray}
p(\bs{x}\mid \bs{c},\bs\sigma)=	\mathbb{P}(\bs{X}=\bs{x}\mid\bs{c},\bs\sigma)
	=\frac{1}{\prod_{j=1}^p \left(1+ \frac{m_j-1}{\exp\left({1/\sigma_j}\right)}\right)} 	\exp\left(-\sum_{j=1}^p {\frac{1-\delta_{c_j}(x_j)}{\sigma_j}}\right),
	\label{eq:pmf}
\end{eqnarray}
and we write $\bs{X}\mid \bs{c},\bs\sigma \sim\mbox{Hamming}(\bs{c},\bs{\sigma})$. 
Although a proper definition of $p(\bs{x}\mid\bs{c},\bs\sigma)$ requires to specify the support $\Omega_p$, the latter is omitted for the sake of notation. 

\begin{prop}
\label{prop:pmf}
	The function $p(\boldsymbol{x}\mid\bs{c},\bs\sigma)$ is a probability mass function on $\Omega_p$, i.e.,\\
	$\sum_{\bs{x}\in\Omega_p}p(\bs{x}\mid\bs{c},\bs\sigma)=1$.
\end{prop}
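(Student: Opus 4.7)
The plan is straightforward: exploit the product structure of the categorical sample space $\Omega_p = A_1 \times \cdots \times A_p$ together with the additive-in-coordinates form of the exponent in \eqref{eq:pmf}, so that the sum over $\Omega_p$ factorizes into a product of one-dimensional sums, each of which I can evaluate by splitting on the Kronecker delta.

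First, I would rewrite the exponential of a sum as a product of exponentials, turning the numerator of $p(\bs{x}\mid\bs{c},\bs\sigma)$ into $\prod_{j=1}^{p}\exp\!\bigl(-(1-\delta_{c_j}(x_j))/\sigma_j\bigr)$. Since $\Omega_p$ is a Cartesian product and each factor $\exp(-(1-\delta_{c_j}(x_j))/\sigma_j)$ depends only on the coordinate $x_j$, the multidimensional sum
\[
\sum_{\bs x \in \Omega_p} \prod_{j=1}^{p}\exp\!\left(-\frac{1-\delta_{c_j}(x_j)}{\sigma_j}\right)
\]
splits into $\prod_{j=1}^{p}\sum_{x_j\in A_j}\exp\!\bigl(-(1-\delta_{c_j}(x_j))/\sigma_j\bigr)$, a standard interchange of $\sum$ and $\prod$ over a product set.

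Next, I would evaluate each inner sum. For fixed $j$, among the $m_j$ modalities in $A_j$, exactly one (namely $x_j=c_j$) contributes $\exp(0)=1$, and the remaining $m_j-1$ modalities each contribute $\exp(-1/\sigma_j)$. Therefore $\sum_{x_j\in A_j}\exp\!\bigl(-(1-\delta_{c_j}(x_j))/\sigma_j\bigr) = 1+(m_j-1)\exp(-1/\sigma_j) = 1 + (m_j-1)/\exp(1/\sigma_j)$, which is exactly the $j$-th factor of the normalizing constant in \eqref{eq:pmf}. Multiplying the result by $1/\prod_{j=1}^{p}(1+(m_j-1)/\exp(1/\sigma_j))$ yields $1$, as required.

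There is really no obstacle here beyond carefully justifying the factorization of the sum over a product space and the splitting of the inner sum according to whether $x_j=c_j$; both steps are elementary since $\Omega_p$ is finite. The argument also makes transparent why the normalizing constant takes the form displayed in \eqref{eq:pmf}, and it will be useful later when computing marginal likelihoods under the Hamming model.
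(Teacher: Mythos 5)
Your proof is correct and follows what is essentially the only natural route, which is also the one taken in the paper's Appendix: factorize the sum over the product space $\Omega_p=A_1\times\cdots\times A_p$ into a product of coordinate-wise sums, evaluate each as $1+(m_j-1)\exp(-1/\sigma_j)$ by splitting on whether $x_j=c_j$, and observe that this product is exactly the normalizing constant in \eqref{eq:pmf}. No gaps.
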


The proof of Proposition \ref{prop:pmf} is given in Section \ref{app:proof_pmf}. \\
Note that, according to the probability mass function (p.m.f.) in Equation \eqref{eq:pmf}, the $p$ categorical variables are assumed independent, i.e., $p(\bs{x}\mid \bs{c},\bs{\sigma}) = \prod_{j=1}^p p(x_j\mid c_j, \sigma_j)$, where $p(x_j\mid c_j, \sigma_j)=\left(1+ (m_j-1)\exp\left(-1/\sigma_j\right)
\right)^{-1} \exp\left(-\left(1-\delta_{c_j}(x_j)\right)/\sigma_j\right)$. In addition, when the scale parameter is constant for all attributes, i.e., $\sigma_j=\sigma>0$, $j=1, \dots, p$, then the p.m.f. simplifies to 
\begin{equation}
p(\bs{x}\mid \bs{c},\sigma)=\frac{1}{\prod_{j=1}^p \left(1+ \frac{m_j-1}{\exp\left({1/\sigma}\right)}\right)}
	\exp\left(-\frac{d_H(\bs{c},\bs{x})}{\sigma}\right),
\label{eq:pmf2}
\end{equation}
that motivates why we refer to the distribution in Equation \eqref{eq:pmf}  as the Hamming distribution. 
The p.m.f. in Equation \eqref{eq:pmf} comprises two multiplicative terms: the first term is the normalizing constant (see Appendix \ref{app:proof_pmf}),  while the second is the kernel of the distribution, i.e., $p(\bs{x}\mid \bs{c},\bs\sigma) = I(\bs{c},\bs\sigma)^{-1}g(\bs{c},\bs\sigma)$, where $I(\bs{c},\bs\sigma)=\prod_{j=1}^p \left(1+ (m_j-1)/\exp\left(1/\sigma_j\right)\right)$ and $g(\bs{x}\mid\bs{c},\bs\sigma)=\prod_{j=1}^p \exp\left(-\left(1-\delta_{c_j}(x_j)\right)/\sigma_j\right)$. Note that the kernel $g(\bs{c},\bs\sigma)$ is 
equal to one when $\bs{x}=\bs{c}$, while it is equal to $\prod_{j=1}^p \omega_j$, where $ \omega_j=\exp\left(-1/\sigma_j\right)$, when $\bs{x}\neq\bs{c}$. In other words, any attribute has a (unnormalized) weight equal to one when it coincides with the center $c_j$, while it has a (unnormalized) weight equal to $\omega_j$ when it differs from the center $c_j$. Therefore, the center $\bs{c}$ represents the unique mode of the distribution when $\sigma_j>0$, $j=1, \dots, p$. 
We notice that the p.m.f in Equation \eqref{eq:pmf} is well defined also when $\sigma_j<0$, $j=1, \dots, p$. In this case, all $\omega_j$ are greater than one, and so the parameter $\bs{c}$ is the minimum of the p.m.f.. In this work, we always assume  $\sigma_j>0$.  

As an illustration, Figure \ref{fig:pmf_plot} shows the p.m.f. for $p=2$ categorical variables that take values $A_1=\{A, B, C, D, E\}$ and $A_2=\{a,b,c,d\}$, with $m_1=5$ and $m_2=4$, respectively. The left and right panels display the p.m.f. for different values of the parameters $\bs{c}$ and $\bs\sigma$. Clearly, the highest probability is associated with the center $\bs{c}$, while $\bs\sigma$ regulates the heterogeneity of the distribution.  
\begin{figure}
	\centering
		\includegraphics[scale=0.37]{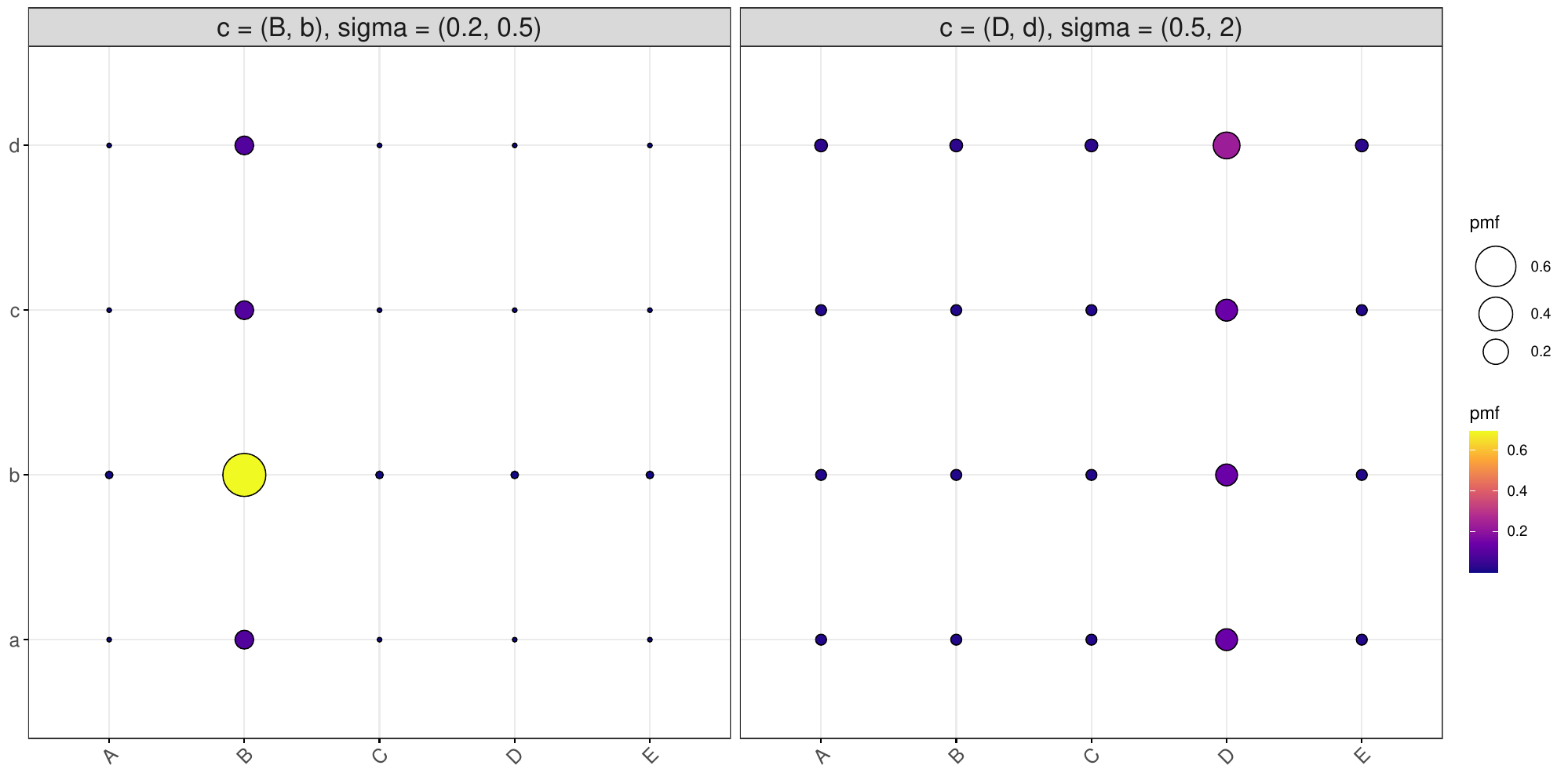}
	\caption{P.m.f. for two categorical variables with different values of $\bs{c}$ and $\bs\sigma$.}
	\label{fig:pmf_plot}
\end{figure}
%
To clarify the role of $\sigma_j$, we derive (see Appendix \ref{app:gini}) the Gini heterogeneity index \citep{gini1912} and the Shannon's entropy index \citep{shannon1948}. 
If $\bs{X}\sim\text{Hamming}(\bs{c},\bs\sigma)$, then the Gini's index of $\bs{X}$ is 
\begin{equation}
	G(\bs{X})  = 1-\sum_{\bs{x}\in\Omega_p} \left[p(\bs{x}\mid \bs{c},\bs\sigma)\right]^2
		 = 1-\prod_{j=1}^p\left[\frac{\exp\left(2/\sigma_j\right)+(m_j-1)}{\left(\exp\left(1/\sigma_j\right)+(m_j-1)\right)^2}\right].
\label{eq:Gini}
\end{equation}
Gini's index is usually normalized by dividing $G(\bs{X})$ with respect to the situation of maximum heterogeneity,  namely $G_N(\bs{X}) = G(\bs{X})/G_{\text{max}}(\bs{X})$, where $G_{\text{max}}(\bs{X})=(1-\prod_{j=1}^p 1/m_j)$.
Hence, when $\sigma_j \rightarrow 0$, for all $j$,  then $G(\bs{X})=0$ and so $G_N(\bs{X})=0$,  reflecting the condition of minimum heterogeneity. On the other side, if $\sigma_j \rightarrow \infty$, for all $j$, then $G(\bs{X})=1-\prod_{j=1}^p 1/m_j$ and so 
$G_N(\bs{X})=1$, representing the condition of maximum heterogeneity (i.e., the p.m.f. collapses to a uniform  p.m.f. on $\Omega_p$). 
In other words, $\sigma_j$ controls the heterogeneity of the p.m.f. in Equation \eqref{eq:pmf}, or equivalently, $1/\sigma_j$ plays a similar role as the index parameter in generalized linear models. 
The details the Shannon's entropy index, together with an illustration of the impact of $\sigma_j$ on the two normalized indexes, are shown in the Supplementary materials (see Appendix \ref{app:gini}).

\subsection{Bayesian inference}
\label{sec:bayes}
Let $\bs{X}_1,\dots, \bs{X}_n$ a collection of data over the discrete Hamming space $\Omega_p$.  We assume the $\bs{X}_i$'s to be conditionally independent and identically distributed, given $\bs{c}$ and $\bs{\sigma}$,  as a $\mbox{Hamming}(\bs{c},\bs{\sigma})$. Hence, given $n$ observations, the likelihood is:
\begin{equation}
p(\bs{x}_1, \dots, \bs{x}_n \mid \bs{c}, \bs{\sigma}) = \prod_{j=1}^p\left(1+\frac{m_j-1}{\exp\left(1/\sigma_j\right)}\right)^{-n}\exp\left(-\frac{n - \sum_{i=1}^n \delta_{c_j}(x_{ij})}{\sigma_j}\right).
\label{eq:like}
\end{equation}
The Bayesian model is then completed by specifying the prior distribution for the parameters. We define semi-conjugate priors for both the center and scale parameters that lead to a straightforward approximation of the posterior inference via a Gibbs sampling scheme based on the full conditional distributions presented below.

With regard to the center parameter, we consider that all $c_j\in A_j$, $j=1, \dots, p$, 
are a priori independent with a discrete uniform distribution on $\{1, \dots, m_j\}$, i.e.,  $\mbox{U}\{1,m_j\}$,  with  
\begin{equation}
p(\bs{c}) = \prod_{j=1}^p \frac{1}{m_j}. 
\label{eq:prior_c}
\end{equation}
Thus, the full conditional posterior probabilities $p(c_j \mid \text{rest})$ are proportional to\\ 
$\exp\left(-(n-\sum_{i=1}^n \delta_{c_j}(x_{ij}))/\sigma_j\right)$.

As far as the scale parameter is concerned, 
we show in the following Proposition that an independent semi-conjugate prior for $\sigma_j$, $j=1, \dots, p$, can be defined.
\begin{prop}
\label{prop:sigma}
The distribution on positive reals  with parameters $v>0$, $w>0$ and density
\begin{equation}
	f(\sigma_j\mid v,w) = \frac{w+1}{m_j^{-(v+w)}{}_2F_1\left(1,v+w;\, w+2;\, \frac{m_j-1}{m_j}\right)}\left(1+\frac{m_j-1}{\exp\left(1/\sigma_j\right)}\right)^{-(v+w)}\exp\left(-\frac{w+1}{\sigma_j}\right)\frac{1}{\sigma_j^2},
	\label{eq:prior_sigma}
\end{equation}
where $_2F_1(\cdot, \cdot; \cdot;\cdot)$ is the hypergeometric function, 
is the semi-conjugate prior for  $\sigma_j$. The updated parameters of its full conditional distribution are $v^\ast = v  + \sum_{i=1}^n \delta_{c_j}(x_{ij})$ and $w^\ast = w+ n-\sum_{i=1}^n \delta_{c_j}(x_{ij})$.
\end{prop}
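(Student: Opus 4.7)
The plan is to verify semi-conjugacy by two steps: first confirming that $f(\sigma_j\mid v,w)$ in Equation \eqref{eq:prior_sigma} is a proper density (which nails down the normalizing constant involving ${}_2F_1$), then multiplying it against the $\sigma_j$-factor of the likelihood in Equation \eqref{eq:like} and reading off the updated parameters.

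For the first step I would show that
\[
\int_0^\infty \left(1+\frac{m_j-1}{\exp(1/\sigma_j)}\right)^{-(v+w)} \exp\!\left(-\frac{w+1}{\sigma_j}\right)\frac{d\sigma_j}{\sigma_j^2} = \frac{m_j^{-(v+w)}}{w+1}\,{}_2F_1\!\left(1,v+w;w+2;\frac{m_j-1}{m_j}\right).
\]
The substitution $u=1/\sigma_j$ kills the $\sigma_j^{-2}$ factor and yields $\int_0^\infty (1+(m_j-1)e^{-u})^{-(v+w)} e^{-(w+1)u}\,du$; then $t=e^{-u}$ gives $\int_0^1 t^w(1+(m_j-1)t)^{-(v+w)}\,dt$; finally $s=1-t$ produces $m_j^{-(v+w)}\int_0^1(1-s)^w(1-\tfrac{m_j-1}{m_j}s)^{-(v+w)}\,ds$, which is exactly Euler's integral representation
\[
{}_2F_1(a,b;c;z)=\frac{\Gamma(c)}{\Gamma(b)\Gamma(c-b)}\int_0^1 s^{b-1}(1-s)^{c-b-1}(1-zs)^{-a}\,ds
\]
with $a=v+w$, $b=1$, $c=w+2$, $z=(m_j-1)/m_j$. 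Using $\Gamma(1)\Gamma(w+1)/\Gamma(w+2)=1/(w+1)$ gives the displayed constant, and hence Equation \eqref{eq:prior_sigma} integrates to one.

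For the second step, I would isolate the $\sigma_j$-dependent piece of the likelihood \eqref{eq:like},
\[
L(\sigma_j)\propto\left(1+\frac{m_j-1}{\exp(1/\sigma_j)}\right)^{-n}\exp\!\left(-\frac{n-n_j}{\sigma_j}\right), \qquad n_j:=\sum_{i=1}^n \delta_{c_j}(x_{ij}),
\]
and multiply it with the prior kernel. The three $\sigma_j$-bearing factors combine additively in the exponents:
\[
f(\sigma_j\mid\text{rest})\propto \left(1+\frac{m_j-1}{\exp(1/\sigma_j)}\right)^{-(v+w+n)} \exp\!\left(-\frac{(w+1)+(n-n_j)}{\sigma_j}\right)\frac{1}{\sigma_j^2}.
\]
This has the same functional form as \eqref{eq:prior_sigma} provided we match $v^\ast+w^\ast=v+w+n$ and $w^\ast+1=(w+1)+(n-n_j)$. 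Solving gives $w^\ast=w+n-n_j$ and $v^\ast=v+n_j$, which is the stated updating rule. Once the functional form matches, the normalizing constant is automatically the one obtained in step one with $(v,w)$ replaced by $(v^\ast,w^\ast)$, so the full conditional is again in the family.

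The only genuine obstacle is the normalizing-constant calculation: the chain of substitutions $u=1/\sigma_j$, $t=e^{-u}$, $s=1-t$ is clean but one must choose them in exactly that order to land on the $(1-zs)^{-a}$ form of Euler's representation with $z=(m_j-1)/m_j\in(0,1)$ (so that ${}_2F_1$ is finite). Everything else is bookkeeping of exponents, and the identification of $v^\ast,w^\ast$ follows by solving a $2\times 2$ linear system.
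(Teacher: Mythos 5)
Your proof is correct and follows essentially the same route as the paper's: compute the normalizing constant via the substitutions $u=1/\sigma_j$, $t=e^{-u}$, $s=1-t$ and Euler's integral representation of ${}_2F_1$ (using its symmetry in the first two arguments to match the stated ordering), then multiply the prior kernel by the $\sigma_j$-dependent likelihood factor and match exponents to read off $v^\ast=v+\sum_i\delta_{c_j}(x_{ij})$ and $w^\ast=w+n-\sum_i\delta_{c_j}(x_{ij})$. No gaps.
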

The proof of Proposition \ref{prop:sigma} is given in Section \ref{app:proof_sigma}.\\
 We refer to the distribution with density given in Equation \eqref{eq:prior_sigma} as the {\it Hypergeometric Inverse Gamma} (HIG) distribution, and we write $\sigma_j\mid v,w \sim\mbox{HIG}(v,w)$.
See  \citet[Equation 9.100]{table_book} for a definition of the Hypergeometric function (also called Gauss series) and  \citet{letac2012} for a probabilistic treatment of distributions whose density involves the Hypergeometric function.

A HIG distributed random variable does not have a finite expectation for any values of the parameters $v$ and $w$. For this reason, other posterior summaries, such as the quantiles, are suggested. Instead, the corresponding unnormalized weight $\omega_j = \exp(-1/\sigma_j)$ has a finite prior mean (see Appendix \ref{app:proof_sigma}) and prior mode.
 The latter, for instance,  is equal to $w/v(m_j-1)$ when $w < v(m_j-1)$, and it is equal to one otherwise. As a result, when $w^\ast < v^\ast(m_j-1)$, the full conditional posterior  mode of $\omega_j$ can be written as 
\begin{equation*}
Mo(\omega_j\mid \mbox{rest}) = \frac{w}{v(m_j-1)}\frac{v}{v+\sum_{i=1}^n \delta_{c_j}(x_{ij})} + \frac{n-\sum_{i=1}^n \delta_{c_j}(x_{ij})}{\sum_{i=1}^n \delta_{c_j}(x_{ij})(m_j-1)}\frac{\sum_{i=1}^n \delta_{c_j}(x_{ij})}{v+\sum_{i=1}^n \delta_{c_j}(x_{ij})},
\label{eq:mode}
\end{equation*}
that is a weighted average of the prior mode of $\omega_j$ and its maximum likelihood estimate, with weights proportional to $v$ and $\sum_{i=1}^n \delta_{c_j}(x_{ij})$, respectively. Therefore, the hyperparameters $v$ and $w$ in Equation \eqref{eq:prior_sigma} can be interpreted as follows: $v+w$ represents the number of observations in a prior sample, and $w$ corresponds to the number of prior observations for which the attribute $j$ differs from its mode; equivalently, $v$ represents the number of prior observations for which the attribute $j$ is equal to its mode. 
Thus, the higher the value of $v$, the higher the weight of the prior information to the posterior inference is. 
Then, the updated parameters of the full conditional distribution clarify the additional contribution of the data, which is summarized by $\sum_{i=1}^n \delta_{c_j}(x_{ij})$. 

To sample from the HIG full conditional distribution of $\sigma_j$, we use the inversion sampling method based on the cumulative distribution function provided in Appendix \ref{app:proof_sigma} (see Equation \eqref{eq:Fsigma}). 
Sampling from the HIG distribution can also be used to choose the values of the hyperparameters of the prior distribution on $\sigma_j$. In fact, to clarify the role of the hyperparameters, we can look at the Monte Carlo distribution of the normalized Gini's index in Equation \eqref{eq:Gini} for different values of $v$ and $w$. For instance, we can elicit the absence of prior information on $\sigma_j$ by setting $v$ and $w$ that lead to a Monte Carlo distribution of the normalized Gini's index close to the uniform distribution; see Appendix \ref{app:proof_sigma} for an illustration. 

When all attributes have the same number of modalities, i.e., $m_j=m$, $j=1, \dots, p$, then a semi-conjugate analysis can be derived for a common scale parameter $\sigma_j=\sigma$, $j=1, \dots, p$. In this case, the full conditional distribution of $\sigma$ is given by: $\sigma \mid \text{rest} \sim \mbox{HIG}(v^\ast, w^\ast)$, where $v^\ast = v+np-\sum_{i=1}^n d_H(\bs{c},\bs{x}_i)$ and $w^\ast = w+\sum_{i=1}^n d_H(\bs{c},\bs{x}_i)$. Alternatively, when the number of modalities varies across the attributes, an inverse gamma prior distribution can be assumed for a common parameter $\sigma$, and a Metropolis-Hastings step can be implemented to update the scale parameter. Finally, a prior distribution on $v$ e $w$ can be placed. The natural choice is a Gamma distribution, which results in Metropolis steps for sampling from their full conditional distribution. Note that, when also $v$ and $w$ are unknown and random, the $\sigma$'s must be estimated jointly because they are not marginally independent.

As a final remark, under the semi-conjugate prior in Equation \eqref{eq:prior_sigma}, the marginal distribution of the data (i.e., the marginal likelihood) is also available in closed form for any discrete prior on $\bs{c}$ and given in the following Preposition. 
\begin{prop}
\label{prop:marginal}
Under the sampling model defined in Equation \eqref{eq:like} and the prior $\sigma_j\mid v,w \sim\mbox{HIG}(v,w)$, the marginal likelihood of the data for any discrete prior on $\bs{c}$ is
$$
p(\bs{x}_1, \dots, \bs{x}_n)= \sum_{\bs{c}\in \Omega_p} \prod_{j=1}^p p(c_j) \frac{I_j(v^\ast, w^\ast)}{I_j(v,w)},$$
	where $I_j(v,w)=\frac{m_j^{-(v+w)}}{w+1}{}_2F_1\left(1,v+w;\,w+2;\,\frac{m_j-1}{m_j}\right)$ is the normalizing constant of $f(\sigma_j\mid v,w)$, and $v^\ast$ and $w^\ast$ are the updated parameters  defined in Proposition \ref{prop:sigma}.
\end{prop}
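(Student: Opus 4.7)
The plan is to compute the marginal likelihood directly by factorization and then invoke the semi-conjugacy established in Proposition \ref{prop:sigma}. By the definition of a marginal distribution,
\begin{equation*}
p(\bs{x}_1,\dots,\bs{x}_n) = \sum_{\bs{c}\in\Omega_p} p(\bs{c}) \int_{\mathbb{R}_+^p} p(\bs{x}_1,\dots,\bs{x}_n\mid \bs{c},\bs\sigma)\,\prod_{j=1}^p f(\sigma_j\mid v,w)\, d\bs\sigma.
\end{equation*}
The goal is to reduce the $p$-dimensional integral to a product of one-dimensional integrals, each of which is recognized as a ratio of HIG normalizing constants.

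The first key step is to exploit the factorization. The likelihood in Equation \eqref{eq:like} is already written as a product over $j=1,\dots,p$, and the prior on $\bs\sigma$ is a product by assumption. Therefore the integrand factorizes and the integral over $\bs\sigma$ splits as $\prod_{j=1}^p \int_0^\infty L_j(\sigma_j)\, f(\sigma_j\mid v,w)\, d\sigma_j$, where $L_j(\sigma_j) = (1+(m_j-1)/\exp(1/\sigma_j))^{-n}\exp(-(n-\sum_i \delta_{c_j}(x_{ij}))/\sigma_j)$ is the $j$-th factor of the likelihood.

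The second step is the crucial one: substituting $f(\sigma_j\mid v,w)$ from Equation \eqref{eq:prior_sigma} into each one-dimensional integral yields, up to the normalizing constant $1/I_j(v,w)$, the unnormalized kernel
\begin{equation*}
\left(1+\frac{m_j-1}{\exp(1/\sigma_j)}\right)^{-(v+w+n)}\exp\!\left(-\frac{w+1+n-\sum_i\delta_{c_j}(x_{ij})}{\sigma_j}\right)\frac{1}{\sigma_j^2}.
\end{equation*}
This is exactly the unnormalized density of a HIG with updated parameters $v^\ast = v + \sum_i \delta_{c_j}(x_{ij})$ and $w^\ast = w + n - \sum_i \delta_{c_j}(x_{ij})$, as identified in Proposition \ref{prop:sigma}. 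Hence the $j$-th integral equals $I_j(v^\ast,w^\ast)/I_j(v,w)$. Multiplying across $j$ and summing over the discrete support of $\bs{c}$, using the factorized form $p(\bs{c})=\prod_{j=1}^p p(c_j)$, yields the claimed expression.

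No real obstacle arises: the heavy lifting has already been done in Proposition \ref{prop:sigma}, where the same kernel identification proves both semi-conjugacy and the identity of the updated parameters. The only care needed is algebraic: verifying that the exponents on $(1+(m_j-1)e^{-1/\sigma_j})$ and on $e^{-1/\sigma_j}$ combine exactly to $v^\ast+w^\ast$ and $w^\ast+1$, respectively, which is immediate from the definitions of $v^\ast$ and $w^\ast$. Because the proof is essentially a bookkeeping consequence of Proposition \ref{prop:sigma}, the argument above suffices with a pointer to Appendix \ref{app:proof_sigma} for the ratio-of-constants identification.
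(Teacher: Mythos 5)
Your proof is correct and follows essentially the same route as the paper's: factorize the marginal over $j$, recognize each one-dimensional integrand as the unnormalized $\mbox{HIG}(v^\ast,w^\ast)$ kernel via the semi-conjugacy of Proposition \ref{prop:sigma}, and read off the integral as the ratio $I_j(v^\ast,w^\ast)/I_j(v,w)$. The exponent bookkeeping ($v^\ast+w^\ast=v+w+n$ and $w^\ast+1=w+1+n-\sum_i\delta_{c_j}(x_{ij})$) checks out, so nothing is missing.
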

The proof of Proposition \ref{prop:marginal} is given in Section \ref{app:post_marg}.

\section{Hamming mixture model}
\label{sec:mixture}
The sampling model in Equation \eqref{eq:like} assumes that the $p$ categorical variables are independent.
To simultaneously  capture the dependence between the variables and cluster the data, we introduce a mixture model of Hamming distributions. 
In particular, we assume that each observation $\bs{x}_i$, $i=1, \dots, n$, 
comes from a  combination of $L$ mixture components, that is
\begin{equation}
h(\bs{x}_i\mid  \bs{c}_1, \dots, \bs{c}_L, \bs{\sigma}_1, \dots, \bs{\sigma}_L, \bs\pi, L)=\sum_{l=1}^{L}\pi_l \: p(\bs{x}_i\mid\bs{c}_l,\bs{\sigma}_l),
	\label{eq:mixture}
\end{equation}
where $p(\bs{x}_i\mid\bs{c}_l,\bs{\sigma}_l)$ is the p.m.f. of the Hamming distribution defined in Equation \eqref{eq:pmf}, with center $\bs{c}_l=(c_{1l},\dots, c_{pl})$ and scale parameter $\bs\sigma_l=(\sigma_{1l}, \dots \sigma_{pl})$, with $\sigma_{jl} >0$, $j=1, \dots, p$ and $l=1, \dots, L$. The mixing weight 
$\pi_l$  is the probability that a generic observation $i$ belongs to component $l$ and it relies on the constraint $\sum_{l=1}^{L}\pi_l=1$, with $0\leq\pi_l\leq1$. 
We refer to the mixture model in Equation \eqref{eq:mixture} as the \textit{Hamming Mixture Model (HMM)}. 

Following \citet{Argiento2022IsIT}, we assign a prior distribution on the mixing weights through the normalization
of Gamma random variables. Namely, for each $l=1, \dots, L$, we let $\pi_l = S_l /T$, where $S_l$ are, conditionally on $L$, independent random variables with Gamma distribution with shape parameter $\gamma$ and unit rate, i.e., $S_1, \dots, S_L\mid L,\gamma\overset{iid}{\sim} \mbox{Gamma}(\gamma,1)$, and $T=\sum_{l=1}^L S_l$. This is equivalent to assume a symmetric $\mbox{Dirichlet}_L(\gamma, \dots, \gamma)$ prior distribution for $\bs\pi = (\pi_1, \dots, \pi_L)$, where the hyperparameter $\gamma$ regulates the prior information about the relative sizes of the mixing weights, roughly speaking - small values of $\gamma$ favor lower entropy $\pi$'s (i.e. {\it sparsity}), while large values favor higher entropy $\pi$'s \citep{rousseau2011,wall2016}. 

As a distinctive feature of the model, we assume the number of mixture components to be random with a prior distribution $q(L)$.  
To sum up,  the Bayesian HMM is given by:
\begin{equation}
\begin{array}{rll}
\bs{x}_i \mid \bs{c}_1, \dots, \bs{c}_L, \bs{\sigma}_1, \dots, \bs{\sigma}_L, S_1, \dots, S_L, L\; &\overset{iid}{\sim} \;\sum_{l=1}^{L} S_l/T \; p(\bs{x}_i\mid\bs{c}_l,\bs\sigma_l) & i=1, \dots,n \\
{S_l} \mid \gamma, L  \; & \overset{iid}{\sim} \; \mbox{Gamma}(\gamma,1) & l=1, \dots, L\\
c_{jl} \mid  L \; & \overset{iid}{\sim} \; \mbox{U}\{1, m_j\}& j=1, \dots, p;\   l=1, \dots, L\\
\sigma_{jl} \mid v, w, L \;& \overset{iid}{\sim} \; \mbox{HIG}(v, w)& j=1, \dots, p;\  l=1, \dots, L\\
L & \sim \; q(L)
\end{array}
\label{eq:model}
\end{equation}
As customary in mixture modeling, a level of hierarchy can be added to the HMM in Equation \eqref{eq:model} to facilitate the computation. To this end, we introduce a latent allocation vector $\bs{z} = (z_1\dots, z_n)$, whose element $z_i\in \{1, \dots, L\}$ denotes to which component the observation $\bs{x}_i$ is assigned; thus, $\mathbb{P}(z_i=l\mid \bs{S})=S_l/T$, where $\bs{S} = (S_1, \dots, S_L)$. In other words, we assume that the allocation variables $z_i$ are conditionally
independently distributed given $\bs{S}$ and they come
from a multinomial distribution with parameter $\bs{S}/T$, i.e., $z_i\mid \bs{S} \overset{iid}{\sim} \mbox{Multinomial}(S_1/T, \dots, S_L/T)$. The complete hierarchical specification of the HMM is given in Appendix \ref{app:model}. 
The model assumes that the categorical variables are conditionally independent given the latent allocations. This assumption is also known as the local independence assumption that basically decomposes the dependence between the variables into the mixture components, i.e., within each component, the $p$ categorical variables are independent.

The mixture model induces a clustering among the observations. To formally define the clustering, we observe that the HMM  belongs to the wide class of species sampling models, investigated in detail in \citet{pitman1996} and largely adopted in Bayesian nonparametric framework; see, among others,  \citet{ishwaran2003, miller2018, Argiento2022IsIT}.  To clarify,  we obtain ties among the latent allocations $z_1, \dots, z_n$ with a probability greater than zero. We denote $\bs{z}^\ast=(z_1^\ast, \dots, z_K^\ast)$, $K\leq L$ the unique values among these allocations. Let $\rho_n := \{C_1, \dots, C_K\}$ be the random partition (clustering) of the set $\{1, \dots, n\}$ induced by $\bs{z}^\ast$, where $C_k = \{i: z_i=z^\ast_k\}$, for $k=1, \dots, K$. In other words, two observations $\bs{x}_i$ and $\bs{x}_{i^\prime}$ belong to the same cluster $k$ if and only if $z_i = z_{i^\prime} = z_k^\ast$. 

	We highlight that the number of components  $L$ differs from the number of clusters $K$. 
	 Indeed, we notice that the values assumed by $\bs{z}^\ast$ are $K$, which is smaller or equal to $L$.
	 Then, when sampling data from the HMM,  some of the $L$ components of the mixture may turn out to be empty, i.e., no data have been generated by such components. 	This difference has been pointed out by \citet{nobile2004}, who noticed that the posterior distribution
	of the number of components $L$ might assign considerable probability to
	values greater than the number of allocated components $K$. Here, we denote $L^{(a)} = K\leq L$ the number of allocated components, namely the number of clusters, and  $L^{(na)}$  the number of non-allocated components, with $L = L^{(a)} + L^{(na)}$; the superscripts $(a)$ and $(na)$ refer to the allocated and non-allocated components, respectively. 

Both the partition $\rho_n$ and the number of clusters $K$ are random quantities whose prior distribution is induced by the model. In particular, the probability law of $\rho_n$, called exchangeable partition probability function  (eppf) in the terminology by \citet{pitman1995}, determines the (random) number of clusters $K$ and the size of each cluster $C_k$, i.e., $n_k$. 
The eppf allows us to define a generative model for $\rho_n$ that is referred to as the {\it Chinese restaurant process} in the literature. Moreover, summing over all possible values of $n_k$, the marginal prior distribution of $K$ is derived. In Appendix \ref{app:model}, we report the analytic form of the eppf and the p.m.f. of $K$; for more details, we refer the interested reader to \citet{miller2018} and \citet{Argiento2022IsIT}. 

\subsection{Connections to latent class models}
\label{subsec:LCM}
As discussed in the Introduction, the LCM is the most commonly used model for clustering categorical data. Here, we show the connection between the HMM and the LCM. The standard LCM proposed by \citet{goodman1974} assumes that each observation $\bs{x}_i$ arises from a mixture of $L$ multivariate multinomial distributions, that is
\begin{equation}
	p(\bs{x}\mid  \bs{\alpha}_1, \dots, \bs{\alpha}_L, \bs\pi, L)=\sum_{l=1}^{L}\pi_l \: p(\bs{x}\mid\bs{\alpha}_l),
\label{eq:lcm}
\end{equation}
where $\bs\pi$ is the vector of the mixing weights and $p(\bs{x}_i\mid\bs{\alpha}_l)$ is the p.m.f. of a multinomial distribution with parameter $\bs\alpha_l=(\alpha_{hjl}; h=1, \dots, m_j; j=1, \dots, p)$, i.e., $\alpha_{hjl}$ is the probability that variable $j$ has modality $h$ if observation $i$ belongs to component $l$, that is usually interpreted as a latent class.  Customary, an EM algorithm is employed to provide the maximum likelihood estimate of the $(L-1)+L \sum_{j=1}^p (m_j-1)$ model parameters, and the model is fitted with
increasing fixed values of $L$; popular information criteria are then used to select the number of classes, i.e., the number of clusters.

To reduce the number of model parameters, a parsimonious parametrization of the model in Equation \eqref{eq:lcm} has been introduced by \citet{celeux1991} and discussed in \citet{handbook}. The parsimonious model imposes a unique modal value for each variable, with all the non-modal modalities sharing uniformly the remaining mass probability; this assumption is equivalent to the one behind the HMM. Let $b_{jm}$, $j=1, \dots, p$, $l=1, \dots, L$, denote the most frequent modality of variable $j$ belonging to component $l$, i.e., $b_{jl}=\underset{h}{\arg\max}\;\alpha_{hjl}$. Then, the authors replaced the parameter $\bs\alpha_l$ in Equation \eqref{eq:lcm} by $(\bs{b}_l, \bs\varepsilon_l)$, where $\bs{b}_l = (b_{1l}, \dots, b_{pl})$ and $\bs\varepsilon_l=(\varepsilon_{1l}, \dots,  \varepsilon_{pl})$, with
\begin{equation*}
\alpha_{hjl} = \left\{\begin{array}{ll}
1-\varepsilon_{jl} & \text{if } h = b_{jl}\\
\varepsilon_{jl}/(m_j-1) & \text{otherwise}.
\end{array} \right.
\label{eq:epsilon}
\end{equation*}
When $\varepsilon_{jl}\leq (m_j-1)/m_j$,  vector $\bs{b}_l$ provides the modal levels in component $l$ for all variables, while the elements of vector $\bs\varepsilon_l$ can be regarded as scatter values. Hence, the model becomes

\begin{equation}
h(\bs{x}_i\mid  \bs{b}_1, \dots, \bs{b}_L, \bs\varepsilon_1, \dots, \bs\varepsilon_L,\bs\pi, L) = \sum_{l=1}^L\pi_l \prod_{j=1}^p (1-\varepsilon_{jl})\left(\frac{\varepsilon_{jl}}{(m_j-1)(1-\varepsilon_{jl})} \right)^{1-\delta_{b_{jl}}(x_{ij})}.
\label{eq:LCM2}
\end{equation}
First, we note that $b_{jl}$ in Equation \eqref{eq:LCM2}  corresponds to $c_{jl}$ in Equation \eqref{eq:model}. Moreover, when 
$\varepsilon_{jl} = (m_j-1)/(\exp(1/\sigma_{jl})+m_j-1)$, the mixture model in Equation \eqref{eq:LCM2} is equivalent to the HMM. 
For instance, consider the example in Section \ref{sec:hamming_dist} of $p = 2$ categorical variables with $m_1 = 5$ and $m_2 = 4$, respectively, and whose Hamming p.m.f. is shown in Figure \ref{fig:pmf_plot}. Here, $\bm{b}_l = (4,4)$ and $\bs{\varepsilon}_l = (0.36, 0.66)$, so that $\bs\alpha_l = (0.09, 0.09, 0.09, 0.64, 0.09; 0.22, 0.22, 0.22, 0.34)$, which correspond to the marginal probabilities in Figure  \ref{fig:pmf_plot}. 
In other words, the HMM provides a novel parametrization of the LCM with two main benefits: (i) an easier interpretation of the model parameters, i.e., the center and the dispersion parameters, and (ii) a random number of mixing components that leads to a full posterior inference on the number of clusters.

As a final remark, we note that the conjugate HIG prior introduced in Proposition \ref{prop:sigma} represents a two-parameter generalization of the prior for the scatter parameter used in \citet{handbook}. In fact, assuming a HIG($v,w$) prior distribution for the dispersion parameter $\sigma_{jl}$ is equivalent to a Truncated Beta distribution with parameters $v+1$ and $w-1$ for the scatter parameter $\varepsilon_{jl}$, encompassing the prior distribution employed by \citet{handbook}. 

A simplification of the model in Equation \eqref{eq:LCM2} can be considered by assuming that the scatter depends upon variables but not upon clusters and levels, i.e., $\varepsilon_{jl} = \varepsilon_l$. This model can be parametrized as a mixture of Hamming distributions with a common scale parameter $\sigma_l$; see the p.m.f. in Equation \eqref{eq:pmf2}. 
As discussed in \citet{handbook}, the LCM where the scatter does not depend on the levels suffers from a possible inconsistency when the attributes do not have the same number of modalities: the estimated probability of the modal level for a variable with few levels can be smaller than the estimated probability of the minority levels. In other words, the constraint $\varepsilon_{l}\leq (m_j-1)/m_j$ is not always satisfied. Conversely, under the HMM, such constraint is equivalent to assuming $\sigma_{l} >0$, which is always true in our setting. 

\subsection{Identifiability and consistency}
\label{sec:ident_consist}
Let 
$\mathcal{F}=\left\{p(\cdot | (\bs{c},\bs{\sigma}), \bs{c} \in \Omega_p, \bs{\sigma} \in \mathbb{R^+}^p\right\}$ be the family of Hamming distributions on the categorical sample space $\Omega$.
We follow the approach of \citet{nobile_thesis}, based on Doob's theorem \citep{doob}, to show weak convergence of the posterior distribution of the number of components $L$ to a point mass at the true value $L^\ast$. 
A crucial point to prove consistency of $L$ using Doob's theorem is that the collection of finite mixtures generated by $\mathcal{F}$ must be identifiable. In the mixture model framework, identifiability is only required up to a parameter permutation  \citep{teicher63,yakowitz68}, i.e.,  
the model must be $L!$-identifiable.  
We leverage the connections between the HMM and the LCM described in Section \ref{subsec:LCM} to provide a sufficient condition under which the HMM is $L!$-identifiable. 

\citet{allman2009} provided sufficient conditions under which the LCM is \emph{generically} identifiable, meaning that the set of non-identifiable parameters has Lebesgue measure zero. 
As detailed in Section \ref{subsec:LCM}, the HMM offers a convenient parametrization of a parsimonious LCM based on the Hamming distance. Hence, we can use the results of \citet{allman2009} to give a sufficient condition for the HMM model to be generically $L!$-identifiable. 
\begin{theorem}
\label{thm:identifiability}
The parameters of the HMM $h(\bs{x}\mid  \bs{c}_1, \dots, \bs{c}_L, \bs{\sigma}_1, \dots, \bs{\sigma}_L, \bs\pi, L)$ defined in Equation \eqref{eq:mixture} where $p \geq 3$, 
are generically $L!$-identifiable if 
\begin{equation}
\label{eq:identif}
p \geq 2 \lceil{\log L}\rceil +1,
\end{equation}
where $\lceil x\rceil$ is the smallest integer at least as large as $x$.
\end{theorem}

The proof of Theorem \ref{thm:identifiability} is reported in Section \ref{app:identif}. \\
 The result in Theorem \ref{thm:identifiability} allows us to employ Doob's theorem to prove the convergence of the posterior of the number of mixture components at the true
parameter value when the data are generated from a finite mixture over the assumed
family of Hamming distributions. 
Let $\tilde{L}$ be the maximum integer  that satisfies 
Equation \eqref{eq:identif}.
\begin{theorem}
\label{thm:consistency}
Consider $\bs{X}_1, \bs{X}_2, \dots, \bs{X}_n$  an i.i.d. sample from\\ $h(\bs{x}\mid  \bs{c}^\ast_1, \dots, \bs{c}^\ast_{L^\ast}, \bs{\sigma}^\ast_1, \dots, \bs{\sigma}^\ast_{L^\ast}, \bs\pi^\ast, L^\ast)=\sum_{l=1}^{L^\ast}\pi^\ast_l \: p(\bs{x}\mid\bs{c}^\ast_l,\bs{\sigma}^\ast_l)$, where $0\leq L^\ast \leq \tilde{L}$.\\ Assume that:
(i) the prior distribution $q(L)$ has support between one and $\tilde{L}$;
(ii) the prior density $f(\bs{\sigma})$ is any continuous function and
(iii) $\left(\bs{c}^\ast_{l},\bs{\sigma}^\ast_{l} \right) \neq \left(\bs{c}^\ast_{s},\bs{\sigma}^\ast_{s} \right)$ for any $l \neq s$. 
Then, the posterior distribution of $L$ converges almost surely to a point mass of the true number of mixture component determines, with respect to the joint prior $p(L,\bs{\pi},\bs{c},\bs{\sigma})$.
\end{theorem}
The proof of Theorem \ref{thm:consistency} is reported in Section \ref{app:identif}.\\
As discussed in \citet{Miller2023}, the limitation of a Doob-type result lies in its inability to find out whether a given true parameter value falls within the measure zero set, where consistency might not hold. Moreover, such a result is based on the assumption that the data are generated from the assumed class of finite mixture models. 
However, the posterior of the number of components in a mixture model is
especially sensitive to model misspecification \citep{miller_dunson2018, cai2021}. 
Therefore, any inferences regarding the number of components should be drawn with considerable skepticism.
On the other hand, although finite mixture models may be misspecified in practice, consistency ensures the methodology's coherence. 

\subsection{Fitting details and posterior sampling}
\label{sec:fitting}
We assume $q(L)$ in Equation \eqref{eq:model} to be the p.m.f. of a $1-$shifted Poisson distribution with parameter $\Lambda$. Hence, we set the values of hyperparameters $\gamma$ and $\Lambda$ by eliciting prior beliefs on the number of clusters $K$. 
In particular, we use the functions available in the \texttt{R} package \texttt{AntMAN} \citep{AntMAN} to find the value of $\gamma$ such that the induced prior distribution on the number of clusters is centered around a given value $K^*$, for a given value of $\Lambda$. 

The MCMC sampling strategy is based on the blocked Gibbs sampler proposed by \citet{Argiento2022IsIT}. 
An auxiliary latent variable $u$ is introduced to facilitate the computation, where $u\mid T \sim \mbox{Gamma}(n,T)$ and  $T=\sum_{l=1}^L S_l$. The benefit comes from the fact that the unnormalized weights $S_l$ are conditionally independent given the latent variable $u$ \cite{Argiento2022IsIT}.  The parameters to update are $\{u, \bs{z}, L, \bs{S}, \bs{c}_1, \dots, \bs{c}_L, \bs\sigma_1, \dots, \bs\sigma_L \}$  
and the steps of the algorithm are given in Section \ref{app:gibbs} of the Supplementary materials.

Additionally, the MCMC algorithm can naturally handle missing data. Indeed, assuming that the missing data are missing at random, a further Gibbs sampling step can be added to sample from the full conditional distribution of the missing values given all model parameters. Finally, in Section \ref{app:posterior_summary}, strategies for posterior summary are described.
The \texttt{R}/\texttt{C++} code implementing the proposed algorithm and used to carry out the analysis is available at the webpage \url{https://anonymous.4open.science/r/Hamming-mixture-model-12EA}.

\section{Empirical analysis}
\label{sec:analysis}
An extensive simulation study is carried out to assess the accuracy of the HMM in recovering a true partition. The details of the simulation design 
are deferred to the Supplementary materials (Section \ref{sec:simulation}), along with a discussion of the results. In summary, the simulation study shows that the HMM is able to recover the underlying partition and it outperforms competitors such as the K-modes \citep{Huang}, the AutoCLASS \citep{autoclass} and the HD-vector \citep{zhang_clustering_2006} algorithms,  also in the case of missing values and when data are generated from dependent variables within the clusters. 

Analysis of two real datasets are presented here: in Section \ref{sec:zoo}, we analyze a reference
dataset for which the true partition is known, while in Section \ref{sec:real_data}, we analyze a popular
high-dimensional dataset consisting of handwritten digits from the US postal services.
In the Supplementary materials, we present the analysis of two further datasets: the Soybean data, previously studied by \citet{zhang_clustering_2006}, see Section \ref{app:soy}, and the MVAD data, previously studied by \citet{mcvicar2002} and \citet{murphy2021}, see Section \ref{app:mvad}.

\subsection{Analysis of Zoo data}
\label{sec:zoo}

We illustrate the HMM on the Zoo dataset, available at the UCI machine learning repository (\url{https://archive.ics.uci.edu/})  
and previously analyzed in \citet{zhang_clustering_2006}.
The Zoo dataset consists of $101$ animals for which $p=16$ categorical features are available: hair, feathers, eggs, milk, airborne, aquatic, predator, toothed, backbone, breathes, venomous, fins, tail, domestic, cat-size, and legs. The first 15 attributes are dichotomous, while the legs variable has six categories. 
Specialists classified the animals into $7$ different classes (mammals, birds, reptiles, fish, amphibians, insects, and
mollusks), with group sizes $41$, $20$, $5$, $13$, $4$, $8$ and $10$. 

We set the hyperparameters $v_j=6$ and $w_j=0.25$ when $m_j=2$ and $v_j=3$ and $w_j=0.25$ when $m_j=6$, $j=1, \dots, p$;  this choice corresponds to a prior distribution on the normalized Gini's index that is similar to the uniform distribution, see Section \ref{sec:bayes}. 
We also set $\Lambda=7$ and $\gamma=0.68$, which leads to a prior distribution of the number of clusters $K$ centered on seven, see the left panel of Figure \ref{fig:post}. We run the  Gibbs sampler detailed in Section \ref{app:gibbs} for 25,000 iterations with a 5,000 iteration burn-in. The computational time on an Intel(R) Core(TM) i7-8550U CPU processor with a base frequency of 1.80GHz is roughly 0.01 seconds per iteration. The left panel of Figure \ref{fig:post} shows the posterior distribution of  $K$. Clearly, the data shrink the posterior probability mass function towards seven clusters. The right panel of Figure \ref{fig:post} displays the posterior similarity matrix, that is, the proportion of times that two observations have been assigned to the same mixing component over the MCMC iterations; the figure reveals the somewhat low uncertainty of the clustering structure. The clustering is then estimated using 
the variation of the information loss function described in Section \ref{app:posterior_summary}.

\begin{figure}[h]%
\centering
\includegraphics[width=0.43\textwidth]{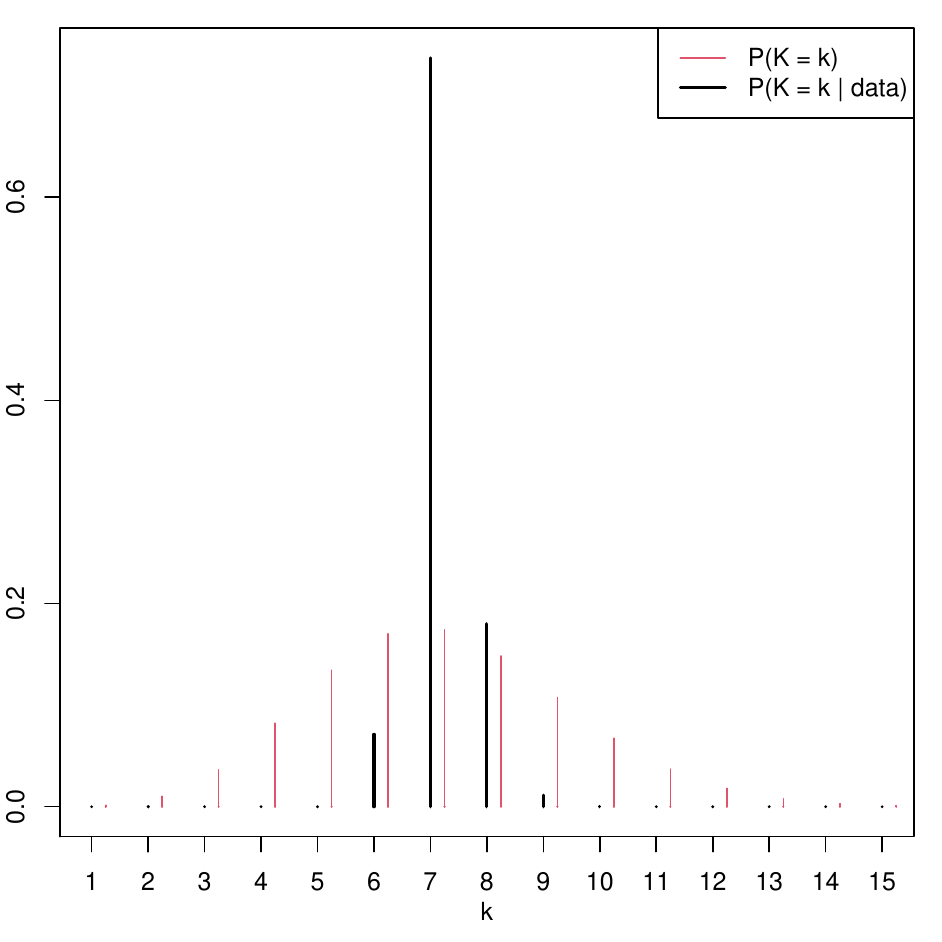}%
\includegraphics[width=0.42\textwidth]{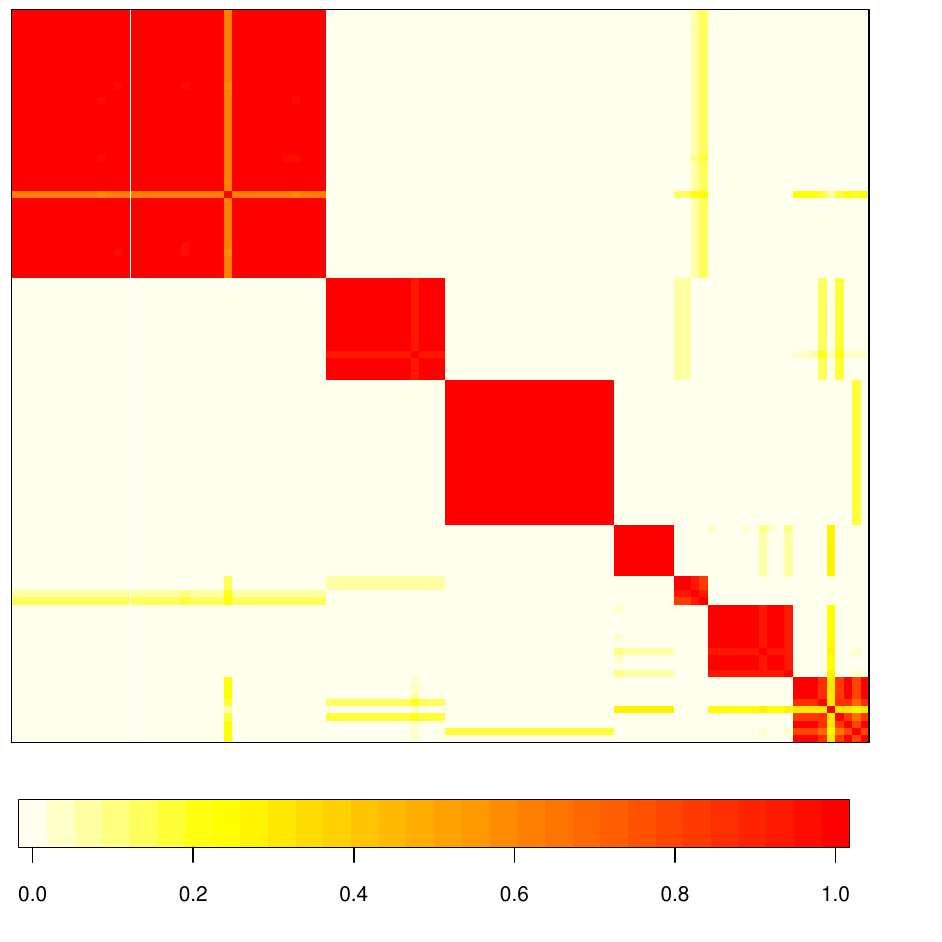}%
\caption{Prior and posterior distribution of $K$ (left panel) and posterior similarity matrix (right panel) for the Zoo dataset.}%
\label{fig:post}%
\end{figure}

Finally, we compare the clustering estimation accuracy of the HMM relative to the K-modes, the AutoCLASS  
and the HD-vector algorithms. 
The comparison is based on the adjusted Rand Index (aRI; \citealt{arandi}) computed with respect to the classification provided by the specialists; the index ranges between zero (the two clustering do not agree on any pair of points) and one (the clustering are the same). The results are presented in Table \ref{tab:comparison_zoo}. 
Because the K-modes and the HD-vector algorithms are sensitive to the initial seeds chosen to start the algorithms, they have been run 100 times with different initial values; the aRI reported in Table \ref{tab:comparison_zoo} is the mean over the 100 runs. 
Note also that the K-modes algorithm needs to specify the number
of clusters, and the results shown in Table \ref{tab:comparison_zoo} are those obtained by setting the number of clusters equal to seven. The results from the AutoClass algorithm are obtained from the Web App available at \url{https://github.com/pierrepo/autoclassweb}. The highest aRI is obtained from the HMM when the scale parameter $\sigma$ is assumed
to be shared across the 16 features (HMM$\sigma$), while the HMM with component-specific scale parameter outperforms the K-modes algorithm but not the HD-vector algorithm. 
Such results are somehow expected because the small size of some clusters challenges any mixture model-based approach, as already noted by  \citet{zhang_clustering_2006}. Hence, borrowing the strength of information across the 16 variables through a shared scale parameter, the HMM provides a better clustering estimation than the HD-vector algorithm. As a final remark, we mention that the HMM delivers more homogenous clusters in terms of Hamming distance relative to those provided by the HD-vector algorithm, as shown by the analysis based on the Silhouette index \citep {ROUSSEEUW198753} presented in Section \ref{app:zoo}.
\begin{table}%
\caption{Clustering estimation accuracy for the Zoo dataset; $\hat{K}$ is the estimated number of clusters, and aRI is the adjusted Rand index.}
\centering
\begin{tabular}{lccccc}
\toprule
& K-modes & AutoClass & HD & HMM & HMM$\sigma$\\
\midrule
$\hat{K}$ & - & 7 & 7 & 7 & 6\\
aRI & 0.70 & 0.85 & 0.93 & 0.87 & 0.95\\
\bottomrule
\end{tabular}
\label{tab:comparison_zoo}
\end{table}

\subsection{Analysis of USPS data}
\label{sec:real_data}
In many scientific fields, it is common to measure hundreds or thousands
of variables on each observation, i.e., to work with high-dimensional data. We present here an application with high-dimensional data to illustrate our model-based clustering methodology. We work with a subset of the popular USPS data from the UCI machine learning repository, consisting of handwritten digits from the US postal services. The dataset contains 1,756 images of the digits 3, 5, and 8, which are the most difficult digits to discriminate. Each digit is a 16 $\times$ 16 gray image represented as a 256-dimensional vector of pixels. We categorize the gray values, ranging from 0 to 2, in $m=6$ levels based on the following intervals: (0,0.001], (0.001,0.75], (0.75,1], (1,1.25], (1.25, 1.5], (1.5,2]. Such intervals have the same width, except for the first and last intervals containing white and black pixels, respectively. Figure \ref{fig:digits} shows a sample of the digits. 
Although the classification task may seem easy when looking at the images, the high-dimensional nature of the data makes the task more difficult for automatic methods. 
	\begin{figure}[h]
	\centering
	\includegraphics[width=0.78\textwidth]{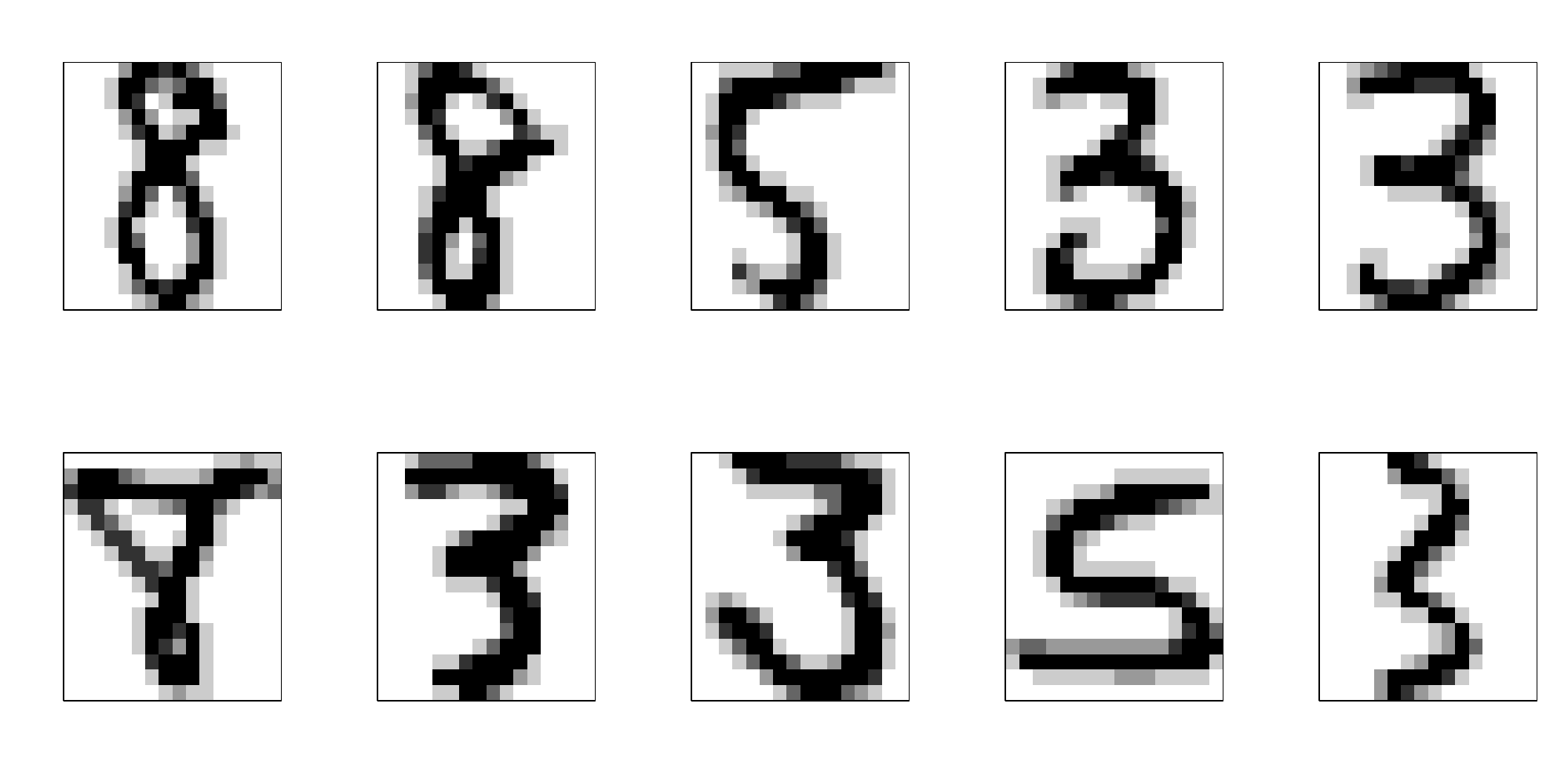}
	\vskip-0.4cm
	\caption{A sample of handwritten digits.}
	\label{fig:digits}
	\end{figure}

Again, the hyperparameters of the HIG prior are set to elicit the absence of prior information on the scale parameters, i.e., $u_j=3$ and $w_j=0.5$. We set $\Lambda=3$ and $\gamma=0.15$ so that the prior distribution on the number of clusters $K$ is centered on three. We fit the HMM running the Gibbs sampler in Section \ref{app:gibbs} for 50,000 iterations with a burn-in of 10,000 iterations. The computational cost on an Intel(R) Core(TM) i7-8550U CPU processor with a base frequency of 1.80GHz is 
roughly 0.25 seconds per iteration.%

Recall the HMM provides a new parametrization of the LCM of \citet{handbook} with a random number of components. As discussed in Section \ref{subsec:LCM}, when the number of components is fixed, a customary strategy is to select the number of classes using information criteria. Figure \ref{fig:bic_icl} displays the Bayesian information criterion (BIC; \citealt{BIC}) and the integrated completed likelihood (ICL; \citealt{biernacki2000})  derived from the EM algorithm for the LCM in Equation \eqref{eq:LCM2} as a function of the number of clusters. The results are obtained using the \texttt{R} package \texttt{Rmixmod} \citep{rmixmod}; models with smaller criteria are preferred. Clearly, for this application, the LCM of \citet{handbook} struggles to identify the number of classes. Conversely, the HMM assumes a random number of mixing components, and it is able to provide posterior inference on the number of clusters and their structure. In particular, our procedure identifies seven groups of digits based on the variation of information criterion applied to the posterior similarity matrix.  
  The posterior mode of the group-specific center parameters, obtained as discussed in Section \ref{app:posterior_summary}, are shown in Figure \ref{fig:post_cent}. For instance, the digits of the number 5 are grouped into three homogenous clusters where i) one cluster contains the digits handwritten with a long bottom tail; ii) the second cluster corresponds to the digits handwritten with almost symmetric tails; iii) the last group contains the digits handwritten with a long upper tail. A similar story holds for the digits of the number 8 that are included in two different clusters: one group contains the digits handwritten with an almost symmetric hole, while a second cluster corresponds to the digits handwritten with a wider upper hole. 

\begin{figure}[h]
\centering
\includegraphics[width=0.75\columnwidth]{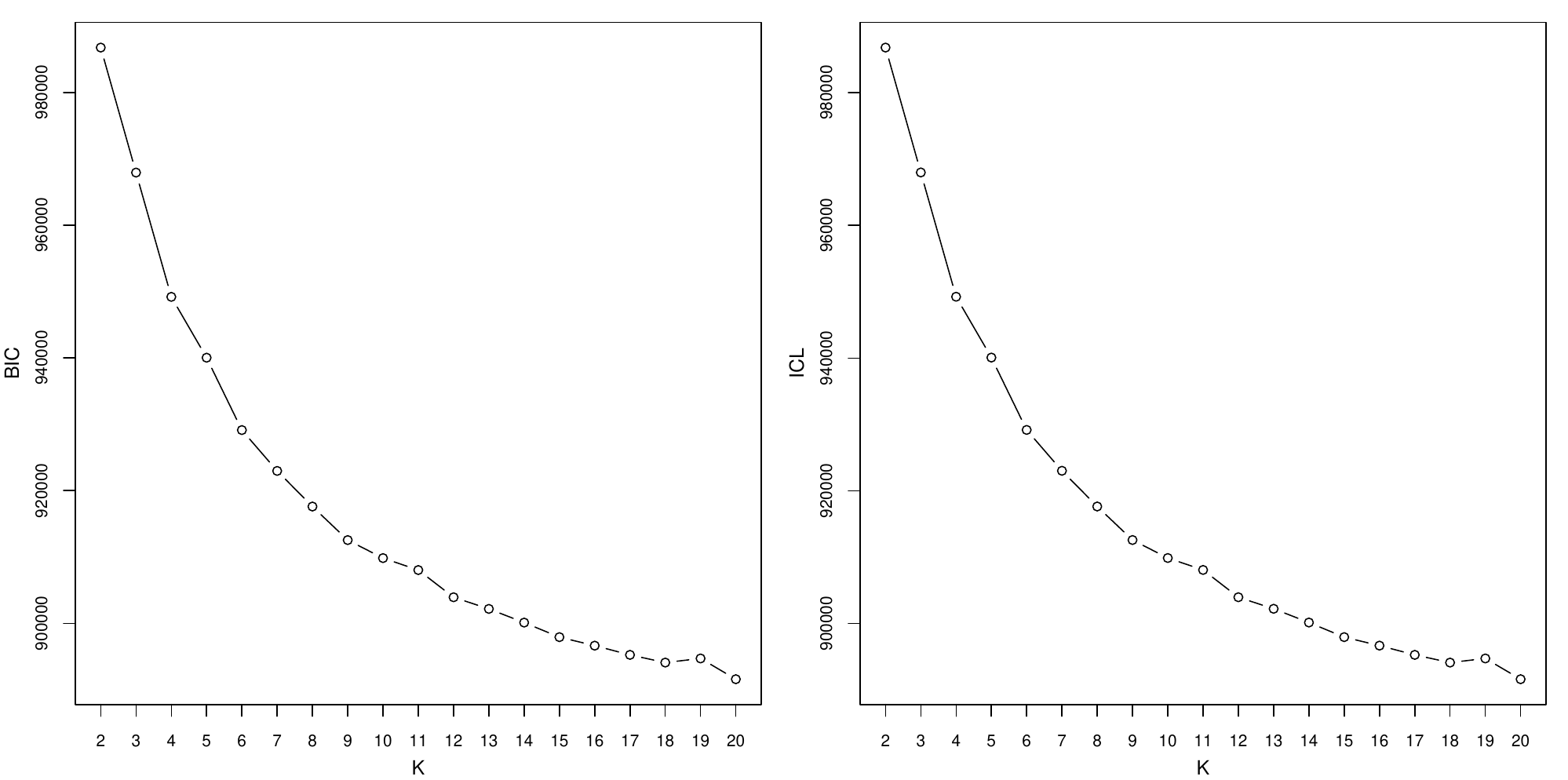}%
\caption{BIC and ICL for the LCM as a function of the number of clusters.}%
\label{fig:bic_icl}%
\end{figure}

\begin{figure}%
\centering
\includegraphics[scale=0.47]{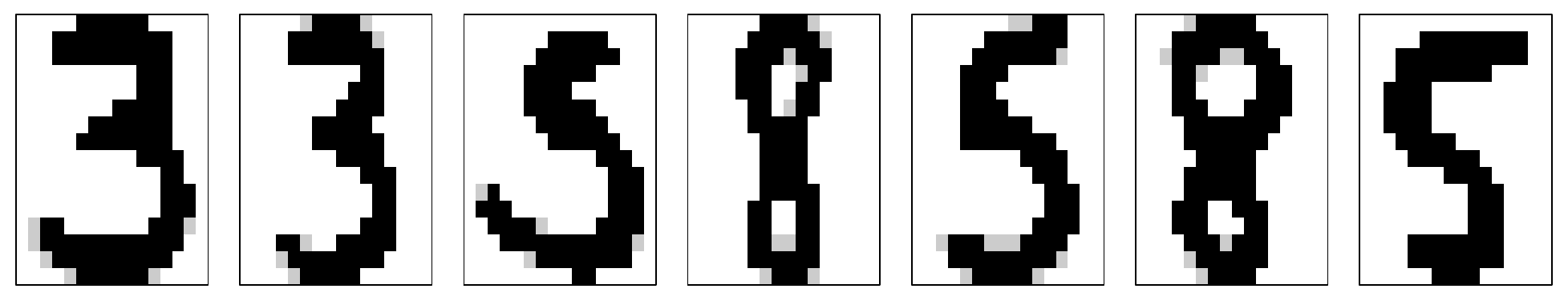}%
\caption{Posterior estimates of the group-specific center parameters.}%
\label{fig:post_cent}%
\end{figure}

\section{Discussion}
\label{sec:discussion}
A new family of parametric probability distributions built on the Hamming distance has been introduced for modeling multivariate unordered categorical data. Conjugate Bayesian inference has been derived for the parameters of the Hamming distribution model together with an analytic expression of the marginal likelihood.  To capture the dependence among the categorical variables and group the data, a finite mixture of Hamming distributions has been developed, leading to a partition of the observations into groups. We showed that the parsimonious LCM is encompassed by our specification as a special case when the number of components is fixed. Leveraging the link between the LCM and the HMM, a sufficient condition for ensuring model identifiability is established, along with the consistency of the posterior distribution of the number of the mixing components.

On the computational side, an efficient sampling strategy based on a conditional algorithm has been designed to provide full Bayesian inference, i.e., for learning the number of clusters, their structure, and the model parameters. The algorithm is a blocked Gibbs sampler with full conditional distributions available in a closed analytical form. Transdimensional moves are implied by the nonparametric formulation of the model, leading to a convenient alternative to the 
intensive reversible jump MCMC. The simulation study and the empirical results have shown the good accuracy of the novel model-based approach in recovering the clustering structure compared to existing methods. In principle, a marginal algorithm based on the Chinese restaurant representation of the finite mixture model can also be implemented since the eppf and the marginal likelihood are available in a closed analytic form, see \citet[Algorithm 2]{Neal} and \citet{Argiento2022IsIT} for details.

A limitation of the HMM is that the model cannot be straightforwardly extended to analyze mixed data, 
 i.e., when observations consist of different data types, including continuous, nominal categorical, and ordinal categorical data. Indeed, the usual model-based approach for clustering mixed data relies on the assumption of underlying continuous latent variables, see \citet{mcparland20219} for a recent review. Rather, the main idea behind the HMM is the usage of a suitable distance measure among the observable, and, to the best of our knowledge, such a distance is not available for mixed data.

Arguably, a useful next step would go beyond the local independence assumption in order to capture associations between categorical variables within each cluster. Some work has been done towards relaxing this assumption in LCMs. 
For instance, \citet{gollini2014} adopted a continuous latent variable approach 
to capture the dependence
among the observed categorical variables. Alternatively, \citet{marbac2015} assumed that the variables are
grouped into independent blocks, each one following a specific distribution that
takes into account the dependency between variables.
However, the Hamming distribution offers a natural setup for modeling an enriched dependence structure among the categorical variables within the clusters. Indeed, similar to the Mahalanobis distance obtained by rotating the Euclidean distance according to a covariance matrix, a modified Hamming distance can be obtained using a matrix capturing the dependence structure of a categorical random vector; this is the topic of our current work.

\begin{center}
{\large\bf SUPPLEMENTARY MATERIALS}
\end{center}
\begin{description}

\item[Supplementary materials] 
include: the proofs of the Propositions and Theorems, a discussion on Gini and Shannon indexes, the hierarchical specification of the model, an extensive simulation study, and additional empirical analysis. (PDF file)
\end{description}

\bibliographystyle{chicago}
\bibliography{Bibliography}

\end{document}